\newtheorem{Definition}{Definition}
\newtheorem{Lemma}{Lemma}
\newtheorem{Proposition}{Proposition}
\title{Stability of Breathers for a Periodic Klein-Gordon Equation}
\author{Martina Chirilus-Bruckner\footnote{Mathematisch Instituut - Universiteit Leiden, P.O. Box 9512, 2300 RA, Leiden, The Netherlands; m.chirilus-bruckner@math.leidenuniv.nl}, Jes\'us Cuevas-Maraver\footnote{Grupo de F\'{i}sica No Lineal, Departamento de F\'{i}sica Aplicada I, Universidad de Sevilla. Escuela Polit\'{e}cnica Superior, C/ Virgen de \'{A}frica, 7, 41011-Sevilla, Spain; jcuevas@us.es}\footnote{Instituto de Matem\'{a}ticas de la Universidad de Sevilla (IMUS). Edificio Celestino Mutis. Avda. Reina Mercedes s/n, 41012-Sevilla, Spain}, Panayotis G. Kevrekidis\footnote{Department of Mathematics and Statistics, University of Massachusetts Amherst, Amherst, MA 01003-4515, USA; kevrekid@umass.edu}}
\date{}
\begin{document}

\maketitle

\begin{abstract}
\small The existence of breather type solutions, i.e., periodic in time, exponentially localized in space solutions, is a very unusual feature for continuum, nonlinear wave type equations. Following an earlier work [Comm. Math. Phys. {\bf 302}, 815-841 (2011)], establishing a
theorem for the existence of such structures, we bring to bear a combination of analysis-inspired numerical tools that permit the construction of such waveforms to a desired numerical accuracy. In addition, this enables us
to explore their numerical stability. Our computations show that for the spatially heterogeneous form of the $\phi^4$ model considered herein, the breather solutions are generically found to be unstable. Their instability seems to generically favor the motion of the relevant structures. We expect that these results may inspire further studies towards the identification of stable continuous breathers in spatially-heterogeneous, continuum nonlinear wave
equation models.    
\end{abstract}

\section{Introduction}

The sine-Gordon model is a quintessential example of a dispersive partial differential equation model within Nonlinear Science that has been explored in numerous reviews~\cite{KM89}, as well as books~\cite{eilbeck,dauxois,sgbook}. One of the very well-known and exciting features of this integrable (via the inverse scattering~\cite{akns}) equation is the existence of exact breather solutions. These are temporally periodic, exponentially spatially localized waveforms that are known in an explicit analytical form in this model.

The presence of such breathers has been recognized in {\it spatially discrete} models as a rather generic feature, ever since the work
of Sievers-Takeno~\cite{ST88}, Page~\cite{P90} and many others. Indeed, not only has this work spearheaded applications
in areas ranging from optical waveguide arrays, to superconducting Josephson junctions, atomic
condensates, DNA and beyond, but it has also inspired numerous reviews summarizing the pertinent
progress, see, e.g.,~\cite{A97,FG08}.

On the other hand, in a classic paper from 30 years ago, Birnir, McKean and Weinstein showed a quite remarkable result~\cite{BMW94}, namely that
only perturbations of the integrable sine-Gordon (sG) model of the form $\sin(u)$, $u \cos(u)$ and $1+3\cos(u)-  4 \cos(u/2)+ 4
\cos(u) \log(\cos(u/4))$ can give rise to breathing waveforms. The first two of these stem from rescalings of the standard
sG breather, while the 3rd one is believed to be impossible. This suggests that breathers are rather {\it non-generic} in
continuous problems. Indeed, in a sense this is intuitively understandable. 
{On the one hand,} a breather will have an
intrinsic frequency associated with it{; o}n the other hand, the background state on which it lies will have a
continuous spectrum of plane wave excitations. Generically, the intrinsic breather frequency or, most typically,
its (nonlinearity induced) harmonics would find themselves in resonance with the continuous spectrum, opening a channel of ``radiative decay'' for the breather. That is the principal reason why in a ``classic'' model such as the $\phi^4$, one of
the celebrated results concerns the non-existence of breathers in the model (at least in a truly
localized form) on account of such a resonance. That being said, in the sG case, the ``magic'' of integrability
precludes the activation of such resonances and leads into the persistence of the exact breather waveform.

More recently, these classic findings have prompted a renewed interest in seeking to identify continuum (but now heterogeneous)  models in which one can establish  {\it rigorously} the existence of such breather waveforms. This was initiated in the study  of one of the present authors~\cite{BCLS11} and  was continued by other groups via different types of (variational) methods~\cite{HR19}, yet the fundamental principle is clear, namely to construct a heterogeneous problem such that its band structure can be identified and the frequency of the breather
and its potential harmonics is non-resonant with the continuous spectral bands.

It is this vein of research that we bring to bear herein, by complementing it with detailed numerical studies. Indeed, upon selecting an example
that is promising towards the avoidance of relevant resonances, we find the relevant breather waveform numerically. We then verify that indeed per the theoretical prediction, the multiples of the relevant frequency do not collide with the spectral bands. We subsequently perform the spectral stability analysis of the relevant breather waveform and also delve into continuations over different ones among the breather (e.g., frequency) and model (e.g., the periodic potential) parameters. Interestingly, we find that in the setting considered herein, the breather waveform
is spectrally unstable. Nevertheless, when exploring the dynamical evolution of the respective structures, we find that typically the result of the instability is {\it not} the disintegration of the breather, but rather its mobility. This is also, to some degree, surprising, given that
in media where the spatial periodicity is reflected in their ``discreteness'' it is well-known that so-called Peierls-Nabarro barriers hinder breather mobility~\cite{A97,FG08}.

Our presentation of these results is structured as follows: In section 2, we lay out the general setup of the model and also summarize the main theoretical findings. In section 3, we then present our numerical computations for the breathers, their spectral stability, their parametric continuations and their nonlinear dynamics. Finally, in section 4, we summarize our findings and present our conclusions, as well as some
directions for future studies.

\section{Mathematical Setup and Main Theoretical Result}
\label{SEC:EXISTENCE}

In line with some of these above works and the classical results in sG and $\phi^4$ mentioned in the Introduction, in the present work we consider a Klein-Gordon type equation given by
\begin{align}\label{EQ:NLW}
s(x) \partial_t^2 u = \partial_x^2 u - q(x) u + \rho u^3
\end{align}
for $ x, t, u = u(x,t) \in \mathbb{R} $ with spatially $1$-periodic coefficients $s, q$ and some constant $ \rho \in \mathbb{R} $. The focus of the present
study is on the spectral stability of breathers (see Figure~\ref{FIG:breather}).


\begin{Definition}[\underline{Breather solutions}]
\label{DEF:breathers}
We call a solution $u$ of \eqref{EQ:NLW} a breather (solution), if there exist $ \omega_*, \beta > 0 $ such that the following holds.
\begin{itemize}
 \item[(i)] For all $x, t \in \mathbb{R}$, we have $ u(x, t) = u(x, t+ \frac{2 \pi}{\omega_*}) $.  (``Periodicity in time")
 \item[(ii)] For all $t \in \mathbb{R}$, we have $ \lim_{|x| \rightarrow \infty} u(x,t)e^{-\beta |x|}  = 0 $. ((exponential) ``Localization in space")
\end{itemize}
\end{Definition}

\begin{figure}[!t]
\centering
 \includegraphics[width=0.8\textwidth]{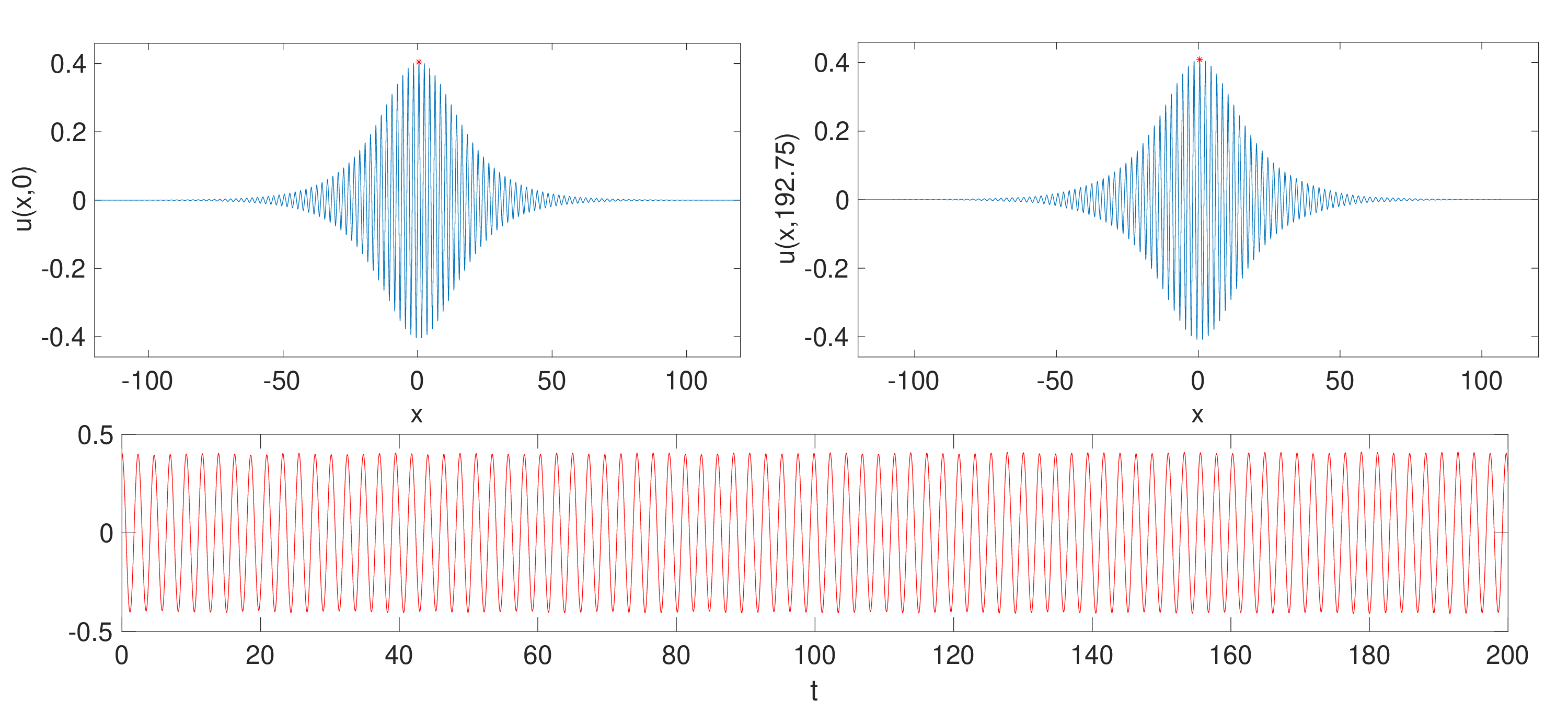}
 \caption{Breather spatial profiles in the upper left and right panels are at different times $t = 0, t \approx 25 \cdot 2\pi/\omega_*$. The lower panel shows the time-evolution of the center value (red dot). Coefficients are chosen to be a {\it resonance-free triplet} $ s_{step}, q_{step}, \omega_* $ as in Definition~\ref{def:resonance_free_triplet} with $ p = 0.43, \varepsilon =  0.15$.
 The initial condition is based on Proposition~\ref{prop:breathers} (Numerical simulation with \texttt{pdepe} from MATLAB). }
\label{FIG:breather}
\end{figure}
The existence of breathers for \eqref{EQ:NLW} has been demonstrated in \cite{BCLS11} for a specific choice of step-function coefficients $ s, q $ (as described in Definition~\ref{def:resonance_free_triplet} in the special case $ p = \frac{6}{13} $). This result seemed rather exceptional given the ``rigidity of breathers'' discussed in the Introduction;
in addition to~\cite{BMW94}, see also \cite{Den93,Den95}.
In fact, the construction of breathers for the case of periodic coefficients uses the tailoring of a spectral picture that would be impossible in the constant-coefficient case.\\
We now build up towards an extended existence result for breathers in the $\phi^4$ model \eqref{EQ:NLW} which is based on the approach in \cite{BCLS11}. The extension manifests itself in determining a whole family of step functions (see Definition~\ref{def:resonance_free_triplet}) for which the core mechanism of the breather construction in \cite{BCLS11} - namely, the tailoring of the band structure - is possible. The corresponding existence result for breathers is then stated in Proposition~\ref{prop:breathers}. Instead of demonstrating its proof we instead give a detailed exposition of the design of the band structure.
\subsection{Designing the band structure to support breather solutions}
Let us now turn our attention to the linear Klein-Gordon equation with periodic coefficients, so
\begin{align}\label{eq:linear_problem}
s(x)\partial_t^2 u = \partial_x^2 u + q(x) u\, ,
\end{align}
for $u = u(x,t)$. The related eigenvalue problem
\begin{align}\label{eq:spectral_problem}
 -\omega^2 s(x) v = \partial_x^2 v + q(x) v\, ,
\end{align}
can be obtained via the ansatz
$$u(x,t) = e^{i\omega t} v(x)\, , \quad  \omega \in \mathbb{R} \, .$$
A value $i\omega \in i\mathbb{R}$ belongs to the spectrum if and only if there exists bounded solutions $v$ of \eqref{eq:spectral_problem}, in other words, when the corresponding Floquet exponents are purely imaginary (see also \eqref{eq:Floquet_stuff} below). It is known from Floquet-Bloch theory that the spectrum consists of a countable infinity of closed intervals, so it is ``banded", and that the corresponding solutions, the so-called Bloch waves, have the form
\begin{align*}
  u(x,t) = e^{i\omega_n(l) t} \, e^{ilx} \, \psi_n(x;l) \, , \quad n \in \mathbb{N}\, , l \in \left(-\pi, \pi\right] \, ,
\end{align*}
with $\psi_n(x;l)$ a periodic function in $x$ with the same period as the underlying coefficients and $\left(\omega_n(l)\right)_{n \in \mathbb{N}}$ a collection of spectral bands fulfilling
$$\omega_n(l) \leq \omega_{n+1}(l), l \in \left(\pi, \pi\right] \,.$$
Using Floquet theory, one can determine the band structure $(l, \omega_n(l))$ via
\begin{align}\label{eq:Floquet_stuff}
    e^{ \pm il} = \frac12 \left( \mathcal{D}(\omega) \pm \sqrt{\mathcal{D}(\omega)^2 - 4} \right) \, ,
\end{align}
so $\pm il$ are the Floquet exponents and
\begin{align}\label{eq:Floquet_discriminant_general}
\mathcal{D}(\omega) = {\rm trace}(\Phi_{\omega}(x)|_{x=1})
\end{align}
is the Floquet discriminant with
\begin{align}\label{EQ:bandstructure_ODE}
\frac{d}{dx} \Phi_{\omega}(x) = \left(
    \begin{array}{cc}
     0 & 1 \\ -(q(x)+s(x)\omega^2) & 0
    \end{array}
    \right) \Phi_{\omega}(x)\, , \quad \Phi_{\omega}(0) = Id \, .
\end{align}
In particular, any $\omega$ with $|\mathcal{D}(\omega)| > 2$ cannot fulfill \eqref{eq:Floquet_discriminant_general} for $l \in \mathbb{R}$, so it must necessarily fall into a spectral gap. This connection between band structure and fundamental system explains why it is either technically involved or impossible to get an explicit expression for the spectral bands, the bottleneck being explicit, workable expressions for a fundamental system for \eqref{EQ:bandstructure_ODE}. Here we focus on the special case of step function potentials $s = s_{\rm step}$ as in \eqref{EQ:s_step} (see Figure~\ref{FIG:s_coeff}).


\begin{Lemma}[\underline{Exact band structure}]\label{lemma:exact_band_structure}
Consider \eqref{eq:linear_problem} with $q(x) = 0$ and $s, \omega$ defined as follows:
\begin{enumerate}
\item For $p \in (3/8,1/2)$, let $s(x+1) = s(x) \, , x \in \mathbb{R},$ and
\begin{align}\label{EQ:s_step}
  s(x) = s_{step}(x) := 1 + C \chi_{[p,1-p)}(x) \, , \ x \in [0,1) \, , \sqrt{1+C} = \frac{2p}{3(1-2p)} \, .
\end{align}
\item Let 
\begin{align}\label{eq:bragg}
  \omega = \omega_*(s) = \frac{\pi}{\int_0^1 \sqrt{s_{step}(\tau)} \, d\tau } =  \frac{3\pi}{8p} \quad \mathrm{(``Bragg \ frequency")}.
\end{align}
\end{enumerate}
Then it holds true that
\begin{align}\label{eq:floquet_discriminant}
   \mathcal{D}(\omega) :=  -\frac{1}{12 p(2p-1)} \left( (3-4p)^2 \cos\left[ \frac83 p \sqrt{\omega^2}\right] - (3-8p)^2 \cos\left[\frac43 p \sqrt{\omega^2}\right]\right) \, ,
 \end{align}
such that the band structure $(l, \omega_n(l))$ can be obtained explicitly using \eqref{eq:Floquet_stuff}. Furthermore, we have $\mathcal{D}(m \omega_*;s) < -2 \, , \quad m \in \mathbb{N}_{odd}$
and so $\omega = m \omega_*,  m \in \mathbb{N}_{odd},$ cannot fulfill the band structure relation \eqref{eq:Floquet_stuff}, meaning that any odd multiple of the Bragg frequency is located in an odd spectral gap.
\end{Lemma}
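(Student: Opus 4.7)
The plan is to exploit the piecewise-constant structure of $s_{\rm step}$: on each of the three subintervals $[0,p)$, $[p,1-p)$, $[1-p,1)$ the ODE \eqref{EQ:bandstructure_ODE} (with $q=0$) is a harmonic oscillator with effective frequency $\omega\sqrt{\sigma}$, so its propagator over length $\tau$ is the standard matrix $T_\sigma(\tau)$ with diagonal entries $\cos(\omega\sqrt{\sigma}\,\tau)$ and off-diagonal entries $\pm(\omega\sqrt{\sigma})^{\pm 1}\sin(\omega\sqrt{\sigma}\,\tau)$. Concatenation gives $\Phi_\omega(1) = T_1(p)\,T_{1+C}(1-2p)\,T_1(p)$. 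The design constraint $\sqrt{1+C}(1-2p)=2p/3$ encoded in \eqref{EQ:s_step} makes the middle-interval phase equal to $2\omega p/3$, commensurate with the outer-interval phase $\omega p$, which is what will eventually produce a clean closed form.

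A direct multiplication, noting that the reflection symmetry of $s_{\rm step}$ about $x=1/2$ forces the $(1,1)$ and $(2,2)$ entries of $\Phi_\omega(1)$ to coincide, gives
\begin{equation*}
\mathcal{D}(\omega) = 2\cos(2\omega p)\cos(2\omega p/3) - (k + k^{-1})\sin(2\omega p)\sin(2\omega p/3),\quad k := \sqrt{1+C}.
\end{equation*}
Two applications of the product-to-sum formulas then recast this as $[1 - \tfrac12(k+k^{-1})]\cos(4\omega p/3) + [1 + \tfrac12(k+k^{-1})]\cos(8\omega p/3)$. Substituting $k + k^{-1} = (40p^2 - 36p + 9)/[6p(1-2p)]$ and placing both coefficients over the common denominator $12p(1-2p)$, the numerators collapse to the perfect squares $(3-4p)^2$ (for the $\cos(8\omega p/3)$ term) and $-(3-8p)^2$ (for the $\cos(4\omega p/3)$ term), reproducing \eqref{eq:floquet_discriminant} exactly after absorbing the sign into the factor $12p(2p-1)$.

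For the spectral-gap statement, substituting $\omega = m\omega_* = 3m\pi/(8p)$ yields $8\omega p/3 = m\pi$ and $4\omega p/3 = m\pi/2$, so for odd $m$ one has $\cos(m\pi) = -1$ and $\cos(m\pi/2) = 0$, and hence $\mathcal{D}(m\omega_*) = -(3-4p)^2/[12p(1-2p)]$. The inequality $\mathcal{D}(m\omega_*) < -2$ is then equivalent to $(3-4p)^2 > 24p(1-2p)$, and this follows from the algebraic identity $(3-4p)^2 - 24p(1-2p) = (3-8p)^2$, which is strictly positive on $p \in (3/8, 1/2)$. Since $\mathcal{D} < -2$ forces a Floquet-exponent pair with $e^{\pm il}=-1$ at the neighboring band edges via \eqref{eq:Floquet_stuff}, the frequency sits in an odd spectral gap. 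The main obstacle throughout is purely algebraic bookkeeping; the fact that everything factorizes so cleanly is itself forced by the Bragg-type commensurability $k(1-2p)=2p/3$, which is precisely what \eqref{EQ:s_step} was engineered to achieve.
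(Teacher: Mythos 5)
Your proposal is correct and follows essentially the same route as the paper: both compute the monodromy matrix as a product of constant-coefficient (harmonic-oscillator) propagators over the three subintervals, take its trace, and evaluate at $\omega = m\omega_*$ where $\tfrac83 p\,\omega = m\pi$ and $\tfrac43 p\,\omega = m\pi/2$ force $\mathcal{D}(m\omega_*) = -(3-4p)^2/[12p(1-2p)] < -2$. Your use of cyclicity of the trace, product-to-sum identities, and the identity $(3-4p)^2 - 24p(1-2p) = (3-8p)^2$ merely streamlines and makes explicit the ``lengthy computation'' the paper leaves implicit (its coefficients $\tfrac12\bigl(\pm 2 + \tfrac{2+C}{\sqrt{1+C}}\bigr)$ are exactly your $1 \pm \tfrac12(k+k^{-1})$).
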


\begin{proof} Observe that
 \begin{align*}
  Y'(x) = \left( \begin{array}{cc} 0 & 1\\ -\lambda r  & 0  \end{array} \right)  Y(x) \, ,
 \end{align*}
 has a fundamental matrix
 \begin{align*}
 \Phi(x) = \left( \begin{array}{cc} \cos(\sqrt{\lambda r} \, x) & \frac{1}{\sqrt{\lambda r}} \sin(\sqrt{\lambda r}\, x)\\[.2cm] - \sqrt{\lambda r} \sin(\sqrt{\lambda r}\, x) & \cos(\sqrt{\lambda r}\, x)   \end{array} \right) \, .
 \end{align*}
 Hence, choosing $s = s_{\rm step}$ to be a step-function as in \eqref{EQ:s_step}
 \begin{align*}
   \mathcal{D}(\lambda;s_{\rm step}) = \mathrm{trace} \left[ \Phi(p) \Phi(1-2p) \Phi(p) \right] \, ,
 \end{align*}
 and a lengthy computation yields
 \begin{align*}
   \mathcal{D}(\lambda;s_{\rm step}) &=  \frac12 \left(2 + \frac{2+C}{\sqrt{1+C}} \right) \cos[u + v] - \frac12 \left(-2 + \frac{2+C}{\sqrt{1+C}} \right) \cos[u - v] \, ,
 \end{align*}
 with
 \begin{align*}
  u \pm v = \sqrt{\lambda} (2p \pm \sqrt{1+C} (1-2p)) \, .
 \end{align*}
 Furthermore, using the Bragg-frequency $\omega_*$ from \eqref{eq:bragg}, we set
 \begin{align*}
   \sqrt{\lambda} = m \omega_* = \frac{m \pi }{\int_0^1 \sqrt{s(\tau)} \, d \tau} = \frac{m \pi}{2 p + \sqrt{1+C}(1-2p)}  \, ,
 \end{align*}
 which gives
 \begin{align*}
  \cos( u + v ) = \cos(m \pi) = -1 \, , \quad  \cos( u - v ) = \cos\left(\frac12 m \pi\right) = 0 \, , \quad m \in \mathbb{N}_{odd} \, ,
 \end{align*}
 so
 \begin{align*}
   \mathcal{D}(m \omega_*;s_{\rm step}) =  - \frac12 \left(2 + \frac{2+C}{\sqrt{1+C}} \right) < - 2 \, , \quad m \in \mathbb{N}_{odd}  \, ,
 \end{align*}
 which is precisely what is stated in Lemma~\ref{lemma:exact_band_structure}.
 \end{proof}

\begin{figure}[!h]
\centering
\scalebox{0.4}{\includegraphics{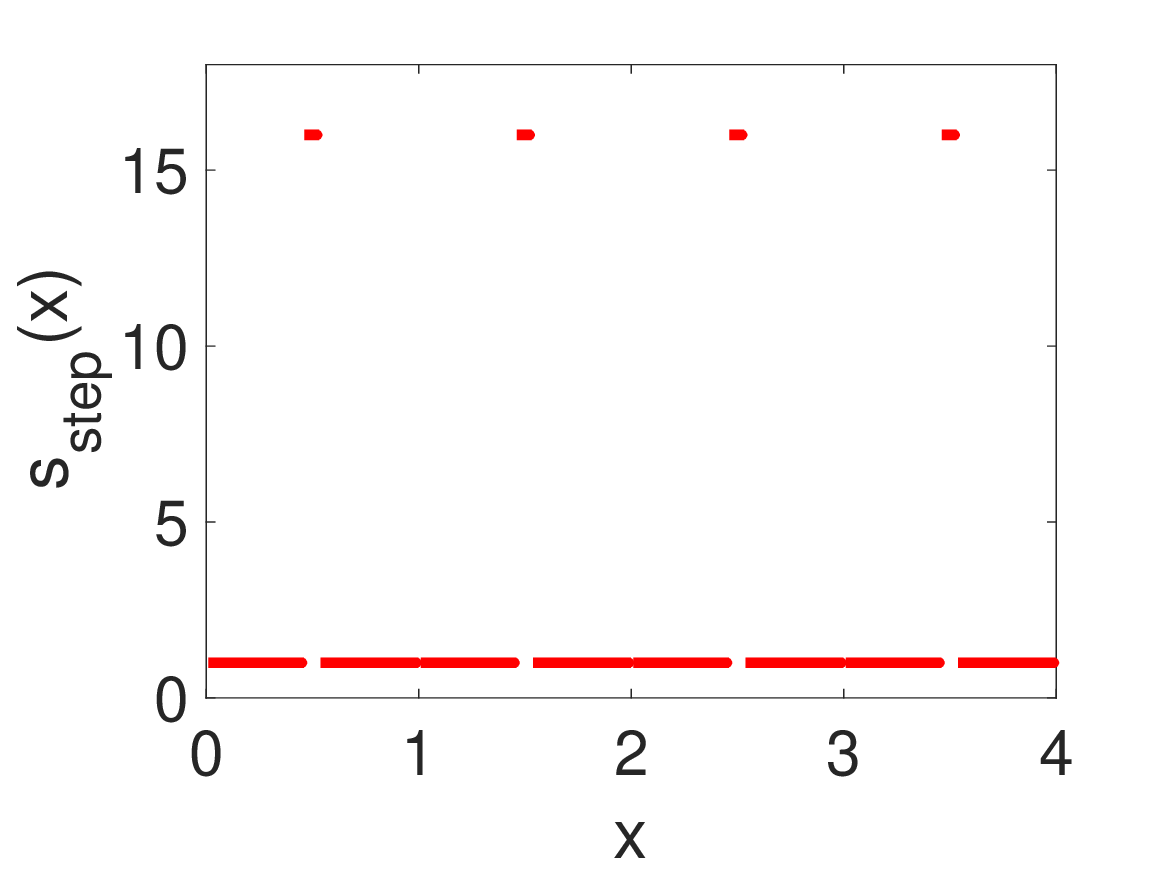}}
\scalebox{0.4}{\includegraphics{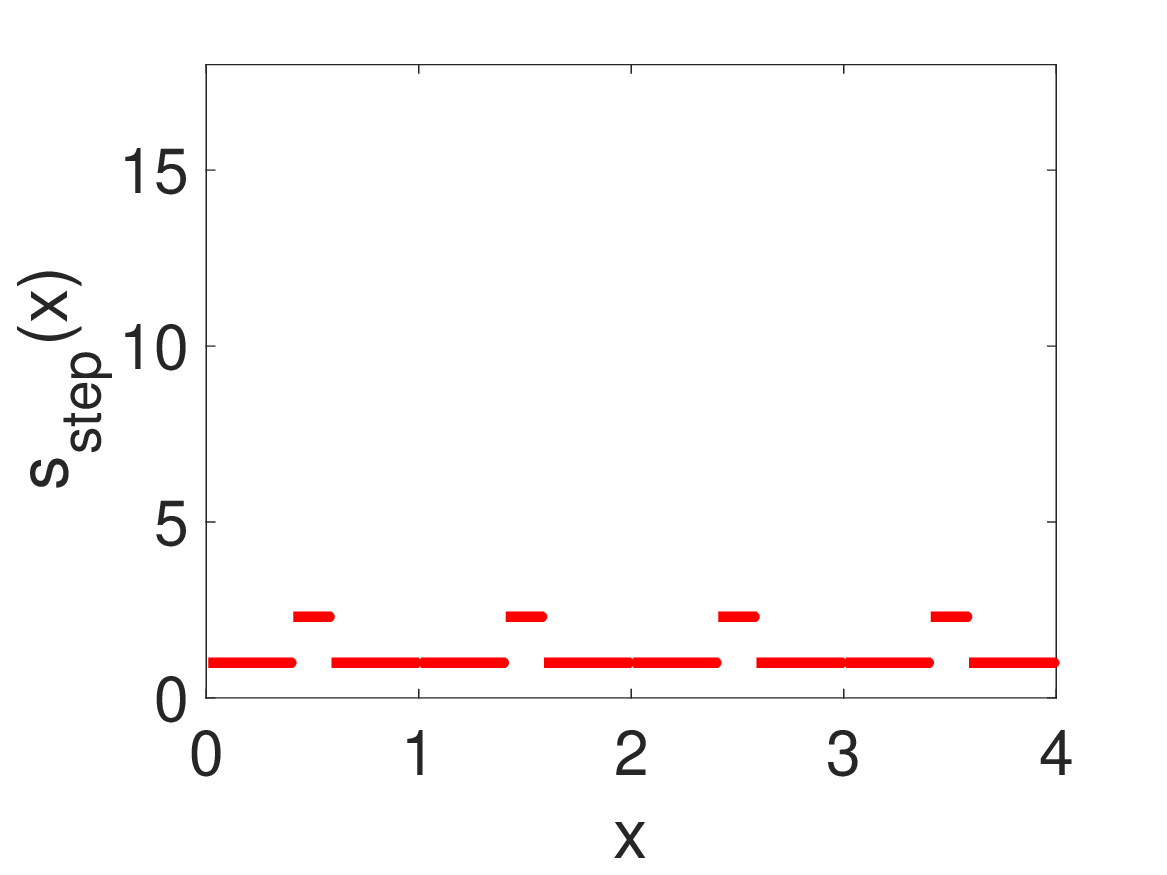}}
\caption{Coefficients $ s_{step} $ as defined in \eqref{EQ:s_step} for different values of $p$ (left: $p = 6/13\approx 0.4615$, right: $p = 0.41$). The closer $p$ is to $1/2$ the steeper and thinner the region unequal one. In the limit $p \rightarrow 3/8$ the step flattens and widens approaching the function identical 1. The steeper the step, the wider the band gaps. Numerically, it is more convenient to smooth out the steps using scaled versions of ${\rm tanh(x)}$.}
\label{FIG:s_coeff}
\end{figure}


\begin{figure}[!h]
     \centering
     \scalebox{1}{
     \begin{subfigure}[b]{0.5\textwidth}
         \centering
         \includegraphics[width=\textwidth]{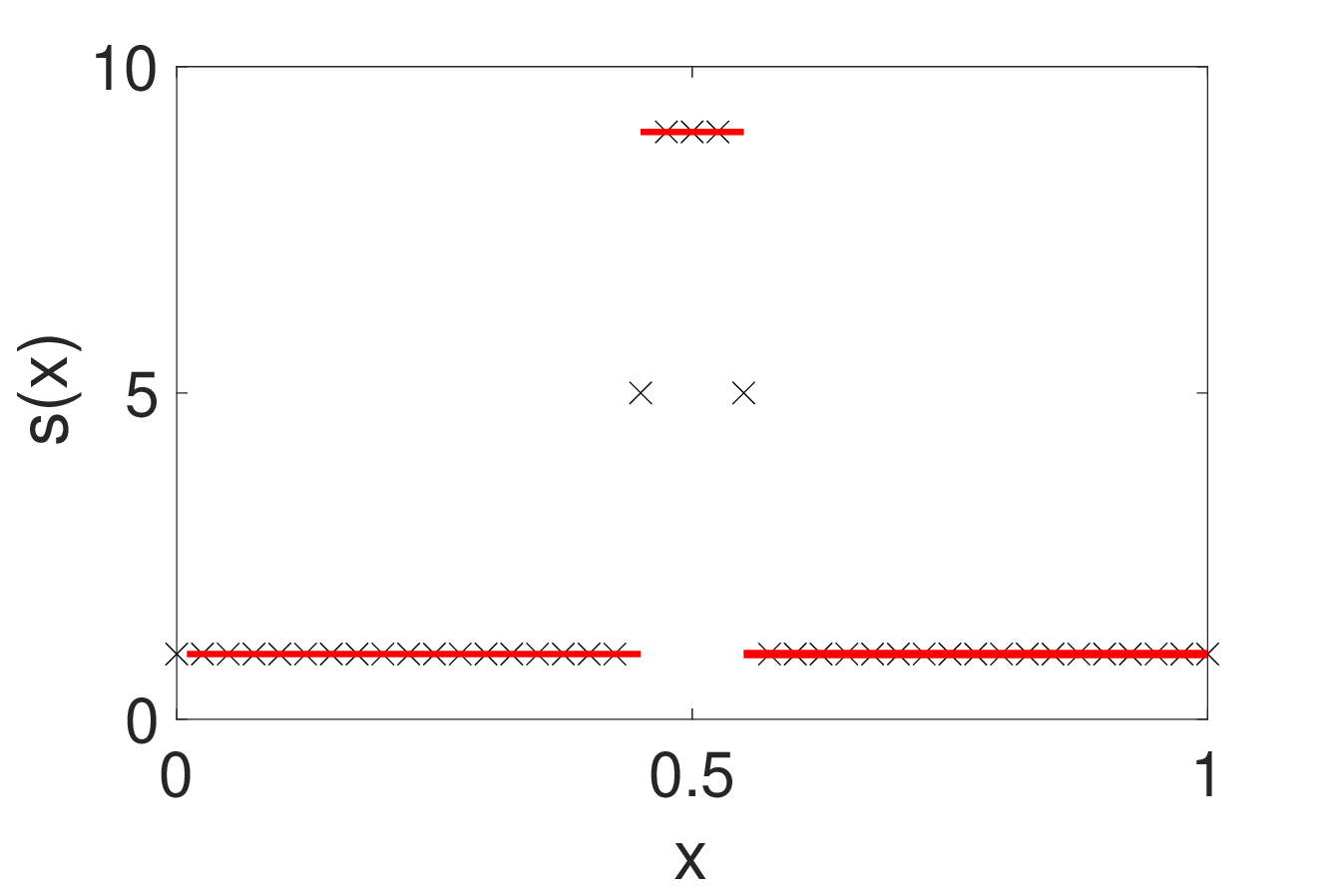}
         \caption{Exact step potential \eqref{EQ:s_step} {\it (solid red line)} and
           discretized version of the step function on the computational
           grid based on Eq.~\eqref{eq:sx_tanh} {\it (black crosses)}.}
         \label{fig:step_function_approximation}
     \end{subfigure}
     \hfill
     \begin{subfigure}[b]{0.45\textwidth}
         \centering
         \includegraphics[width=\textwidth]{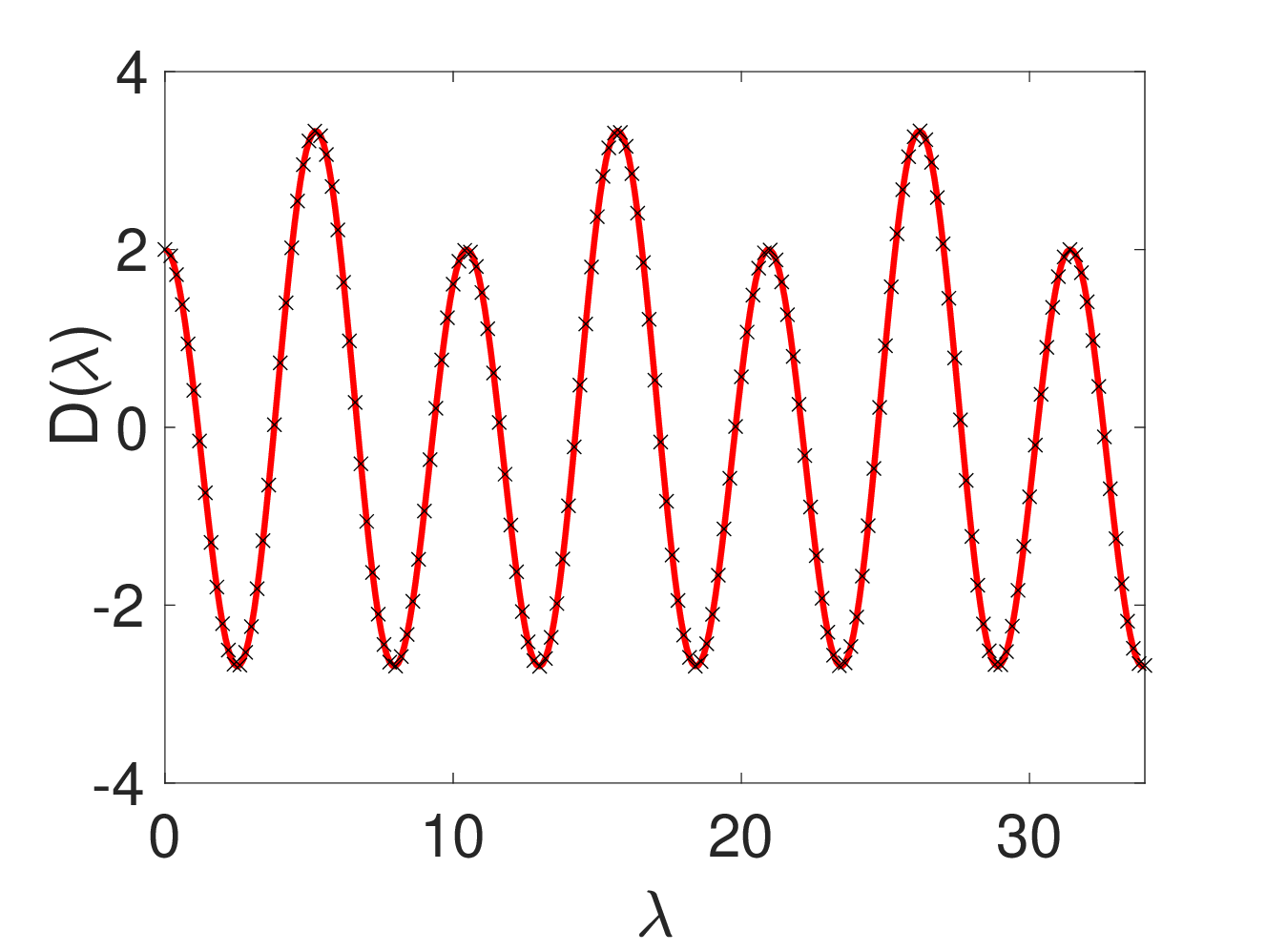}
         \caption{Exact discriminant \eqref{eq:floquet_discriminant} {\it (solid red line)} and numerical approximation of \eqref{eq:Floquet_discriminant_general} using \eqref{eq:sx_tanh} {\it(black crosses)}}
         \label{fig:discriminant}
     \end{subfigure}
     }
        \caption{Comparison of exact computation and numerical approximation of
          the step function heterogeneity and of the discriminant of Eq.~\eqref{eq:floquet_discriminant} (for $p = 0.45$) using Matlab (ODE15s).}
        \label{fig:step_function_and_discriminant}
\end{figure}

 The upshot of this result is that the coefficient $s$ and the breather frequency $\omega$ can be tuned to pairs $(s_{\rm step}, \omega_*)$ resulting in a band structure with uniformly open odd numbered band gaps into which multiples of the breather frequency $\omega_*$ fall. This is the main driving force of the breather construction: designing the spectral properties of the linear part to avoid resonances.
Note that the specific coefficients from \cite{BCLS11} correspond to $p = 6/13$ and have jump regions that are rather steep and thin, a feature which can be avoided by using a value of $ p < 6/13 $ (as is shown,
e.g., in the right panel of Figure~\ref{FIG:s_coeff}).\\

So far, the coefficient $q$ in \eqref{EQ:NLW} has been set to zero. Its role is to create a bifurcation of small-amplitude breathers from a band edge. The direct computation leading to \eqref{eq:floquet_discriminant} is difficult to carry out for non-zero $q$. We instead make use of a theoretical result that the band structure for the case $q = 0$ and $q \neq 0$ (with the same $s$ in both cases) only differs in the lower bands if $q \in L^2(0,1)$ (see, e.g. \cite{CW15}). This can be illustrated numerically (see Figure~\ref{fig:band_structure2} and Section~\ref{sec:num_band} for an explanation of the numerical method).


\begin{Definition}[\underline{Resonance-free triplet}]\label{def:resonance_free_triplet}
We call $(s_{\rm step}, q_{\rm step}, \omega_*)$ a ``resonance-free triplet'' if $s_{\rm step}, \omega_*$ are as in Lemma~\ref{lemma:exact_band_structure} and $ q_{step}(x):= s_{step}(x)(q_0 - \varepsilon^2)$ for $ \varepsilon > 0 $ with $q_0 \in \mathbb{R} $ chosen such that for the system
\begin{align}\label{EQ:onset}
\frac{d}{dx} \Phi_{\omega_*}(x) = \left(
    \begin{array}{cc}
     0 & 1 \\ s_{step}(x)(q_0 -\omega_*^2) & 0
    \end{array}
    \right) \Phi_{\omega_*}(x)\, , \quad \Phi_{\omega_*}(0) = Id \, ,
\end{align}
the Floquet discriminant $ \mathcal{D}(\omega_*; s_{\rm step}, q_0, \varepsilon)$ fulfills the following: 
\begin{itemize}
    \item At $\varepsilon = 0$, we have $ \mathcal{D}(\omega_*; s_{\rm step}, q_0, 0) =  -2, \mathcal{D}(m\omega_*; s_{\rm step}, q_0, 0) <  -2 $ for $ 1 < m \in \mathbb{N}_{\rm odd}$.
    \item For $\varepsilon > 0$ sufficiently small, we have $ \mathcal{D}(\omega_*; s_{\rm step}, q_0, \varepsilon) <  -2 $ for $ m \in \mathbb{N}_{\rm odd}$.
\end{itemize}
\end{Definition}

The main idea behind introducing $q$ in such a special way is to first determine $q_0$ (in terms of $s_{\rm step}$ and $\omega_{*}$) such that $\omega_*$ resides precisely on a band edge and then introduce a small bifurcation parameter $\varepsilon > 0$ that pushes the first band edge down in such a way that $\omega_*$ slips into a spectral band gap. The breather is predicted to exist for $\varepsilon > 0$ small enough.


\begin{figure}[!h]
\centering
\begin{minipage}{3cm}
 \scalebox{.42}{\includegraphics{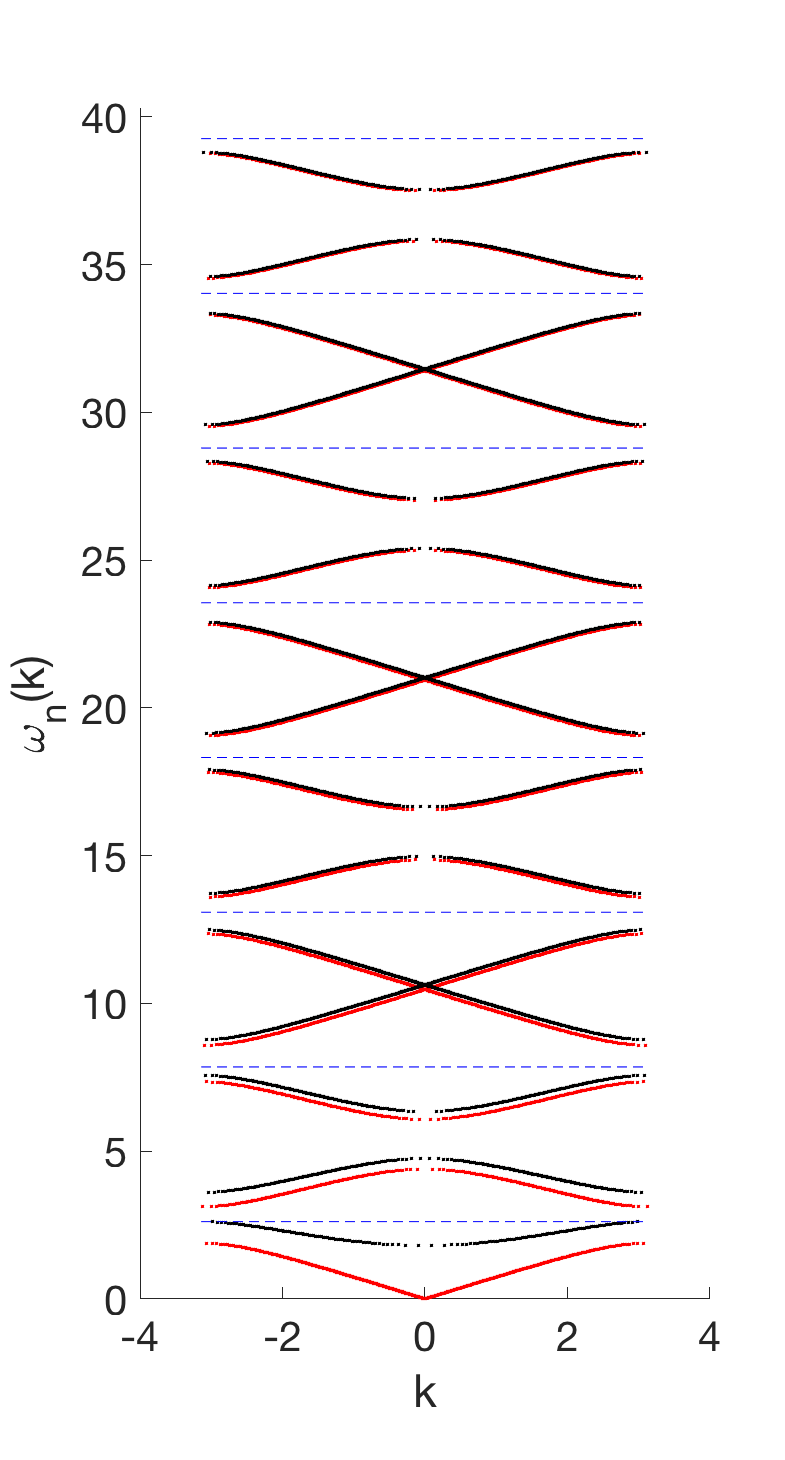}}
\end{minipage}
\hspace{3cm}
\begin{minipage}{5cm}
 \caption{Exact band structure computed from the exact discriminant \eqref{eq:floquet_discriminant} {\it (red)} for $s = s_{\rm step}, q = 0$ vs. numerically computed band structure for $s = s_{\rm step}, q = q_0s_{\rm step}$ with smoothed step function as in \eqref{eq:sx_tanh} using Matlab {\it (black)} along with odd multiples of $\omega_*$ from \eqref{eq:bragg} {\it (blue dashed lines)}. Parameter setting: $p = 0.45$. Observe how the choice of $q_0$ puts the first band edge at $\omega_*$ and how the band structure eventually approaches the exact one with $q = 0$ for higher bands, just as predicted by theory. For $s = s_{\rm step}, q = s(q_0-\varepsilon^2)$ with small $\varepsilon$ the frequency $\omega_*$ will move slightly into a spectral gap.}
 \label{fig:band_structure2}
 \end{minipage}
\end{figure}

\subsection{Existence result and its numerical implementation}\label{sec:num_band}

\begin{Proposition}[\underline{Existence of breathers}]\label{prop:breathers}
Let $ \rho < 0$ and let $(s, q, \omega) = (s_{\rm step}, q_{\rm step}, \omega_*)$ be a resonance-free triplet as in Definition~\ref{def:resonance_free_triplet}. Then there is $\varepsilon_0 > 0$ such that for all $ \varepsilon \in (0, \varepsilon_0) $ there exists a breather solution of \eqref{EQ:NLW} which can be approximated by
\begin{align}\label{EQ:breather_approximation}
 u_\mathrm{breather}(x,t) =   \varepsilon \eta_1 \, \mathrm{sech} \left( \eta_2 \varepsilon x \right) q_{11}(x)   \sin\left( \omega_* (t- t_0) \right)+ h.o.t. \, , \quad t_0 \in \mathbb{R} \, ,
\end{align}
with
\begin{align}\label{eq:constants}
 \eta_1 & := 2\sqrt{\frac{2\bar{s}_1}{\bar{s}_3}} \, ,
 \eta_2 := \sqrt{\bar{s}_1} \, , \\
 \bar{s}_1 & = - \left(\frac{1}{2 \det{S}} \right)\int_0^2 s_{\rm step}(x) q_{11}(x)^2 \, dx \, ,
 \bar{s}_3 = \frac{3 \rho}{2 \det{S}} \int_0^2 q_{11}(x)^4 \, dx \, , \nonumber
\end{align}
where $ Q(x) = (q_{jl}(x))_{j,l = 1,2} $ and $S$ are defined as follows: Let $ \Phi_{\omega_*} $ be a matrix solution for \eqref{EQ:onset}
and define $ Q(x) := P(x) S$ by the Floquet representation
\begin{align}\label{eq:floquet_representation}
 \Phi_{\omega_*}(x) = P(x) e^{x M} = P(x) S e^{x J}  S^{-1} \, , \quad J= \left(
    \begin{array}{cc}
     0 & 1 \\ 0 & 0
    \end{array}
    \right) \, ,  \quad P(x+2) = P(x) \, .
\end{align}
\end{Proposition}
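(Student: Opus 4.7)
The plan is a modulated-Bloch-wave reduction combined with a Lyapunov--Schmidt argument that reduces the problem to a stationary focusing NLS equation for the envelope, whose explicit sech ground state produces the constants $\eta_1, \eta_2$ in \eqref{eq:constants}.

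First, since the cubic nonlinearity couples only odd temporal harmonics, I would search for a time-periodic solution via the sine Fourier series
\[
u(x,t) \;=\; \sum_{m \in \mathbb{N}_{\rm odd}} u_m(x)\,\sin\bigl(m\omega_*(t-t_0)\bigr)\,,
\]
substitute into \eqref{EQ:NLW} and project onto each harmonic to obtain an infinite coupled system of spatial ODEs
\[
-u_m'' + q(x) u_m - m^2 \omega_*^2 s(x) u_m \;=\; \rho\,[u^3]_m\,,\qquad m\in\mathbb{N}_{\rm odd}\,,
\]
with $[u^3]_m$ the $m$-th sine Fourier coefficient of $u^3$. By the resonance-free condition in Definition~\ref{def:resonance_free_triplet}, the Floquet discriminant $\mathcal{D}(m\omega_*)$ lies strictly below $-2$ for every odd $m\ge 3$, so the Floquet exponents are real and the linear spatial operator $\mathcal{L}_m := -\partial_x^2 + q - m^2\omega_*^2 s$ is boundedly invertible on $L^2(\mathbb{R})$ with an exponentially decaying Green's function. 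On a ball in an appropriately weighted Sobolev space the harmonics $u_m$, $m\ge 3$, can therefore be eliminated as a real-analytic function of $u_1$ via the Banach fixed-point theorem, leaving a closed equation for the fundamental mode $u_1$.

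The heart of the argument is the analysis of the critical mode $m=1$. At $\varepsilon=0$ one has $\mathcal{D}(\omega_*)=-2$, the odd band edge, and the Jordan-block structure in \eqref{eq:floquet_representation} means that the $2$-antiperiodic Bloch function $q_{11}$ spans the kernel of $\mathcal{L}_1^{(0)}$ on bounded functions while the second independent solution grows linearly. For $\varepsilon>0$ the band edge slips into a spectral gap, and I would introduce the slow-envelope ansatz
\[
u_1(x) \;=\; \varepsilon A(\varepsilon x)\,q_{11}(x) \,+\, \varepsilon^2 u_1^\perp(x;\varepsilon)\,,
\]
with $u_1^\perp$ in an appropriate complement of $\ker\mathcal{L}_1^{(0)}$. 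Substituting this into the reduced equation for $u_1$ (the feedback from the higher harmonics contributes only at order $\varepsilon^3$) and applying the Fredholm alternative against the adjoint Bloch mode yields a stationary focusing NLS equation for $A$ whose coefficients are exactly the integrals appearing in \eqref{eq:constants}; the factor $\det S$ arises from projecting via the Floquet representation \eqref{eq:floquet_representation}. The assumption $\rho<0$ provides the correct focusing sign, and the explicit exponentially localized ground state of that NLS is $A(\xi)=\eta_1\,\mathrm{sech}(\eta_2\xi)$ with the $\eta_j$ of the proposition.

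The main obstacle is the rigorous closure: showing that this formal asymptotic expansion really corresponds to a genuine solution of \eqref{EQ:NLW}, uniformly as $\varepsilon\to 0$. My plan is to follow the strategy of \cite{BCLS11}: recast the reduced $m=1$ equation as a nonautonomous spatial dynamical system, perform a spatial center-manifold reduction adapted to the Jordan-block degeneracy in \eqref{eq:floquet_representation}, and exploit the reversibility $x\mapsto -x$ together with a quantitative implicit function theorem to persist the NLS homoclinic as a true homoclinic orbit of the spatial flow, thereby producing the exponential localization required by Definition~\ref{DEF:breathers}(ii). The extension from the single case $p=6/13$ treated in \cite{BCLS11} to the whole family $p\in(3/8,1/2)$ is essentially cosmetic because Lemma~\ref{lemma:exact_band_structure} gives $\mathcal{D}(m\omega_*)<-2$ uniformly on every odd $m$, so the key invertibility estimates for $\mathcal{L}_m$, $m\ge 3$, and the geometric structure of $\mathcal{L}_1$ at the band edge are preserved; only the scalar values of $\bar s_1,\bar s_3$ and the Bloch mode $q_{11}$ depend on $p$.
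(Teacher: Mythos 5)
Your proposal follows the same overall strategy that the paper itself only sketches and defers to \cite{BCLS11}: temporal Fourier expansion restricted to the odd-harmonic invariant subspace, use of the resonance-free triplet of Definition~\ref{def:resonance_free_triplet} to keep all higher harmonics non-resonant, bifurcation of the $m=1$ mode from the band edge carrying a $\mathrm{sech}$ envelope, and closure by reversible spatial dynamics; you also correctly identify that the only place the extension from $p=6/13$ to the whole family $p\in(3/8,1/2)$ enters is the uniform discriminant bound of Lemma~\ref{lemma:exact_band_structure}. Where you genuinely deviate is the middle of the argument: you eliminate the harmonics $m\ge 3$ globally on the line by a Banach fixed point (exploiting invertibility of $\mathcal{L}_m$ from $\mathcal{D}(m\omega_*)<-2$) and then derive a stationary NLS for the envelope via a Fredholm-alternative projection, whereas \cite{BCLS11} keeps all harmonics as coordinates of one infinite-dimensional spatial dynamical system, slaves them through a center-manifold reduction (the Jordan block in \eqref{eq:floquet_representation} supplying the two-dimensional center part), and treats the resulting planar ODE with periodic coefficients by averaging rather than by a multiple-scales NLS computation, obtaining the homoclinic orbit there and persisting it by reversibility. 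Your reorganization makes the non-resonance mechanism and the uniformity in $m$ pleasingly explicit, but note two caveats: (i) after the global elimination the reduced $u_1$ equation contains nonlocal Green's-function terms, so it is no longer an ODE in $x$ and cannot literally be ``recast as a nonautonomous spatial dynamical system'' as you state --- the point of the \cite{BCLS11} ordering is precisely that the center manifold does the slaving locally in the spatial phase space, avoiding this nonlocality; relatedly, the persistence there is obtained by averaging/normal-form arguments plus reversibility for the planar system, not by a direct implicit-function continuation of the NLS soliton, which is delicate because of the translation zero mode and the $\varepsilon$-dependent scaling. (ii) A small slip: $q_{11}$ is $2$-periodic (antiperiodic over the period-$1$ cell, which is exactly why the $2$-periodic formulation with real Floquet exponents is used in the proposition), not $2$-antiperiodic.
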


{\bf Proof sketch.} In a first step the time-dependence is eliminated by representing $u$ by its Fourier series in time
\begin{align*}
    u(x,t) = \sum_{n \in \mathbb{Z}} \hat{u}_n(x)e^{in \omega t}
\end{align*}
which turns the PDE into a countably infinite nonlinear system 
\begin{align*}
    0 = \hat{u}_n^{\prime \prime} + \left(\omega^2 s(x)+q(x)\right) \hat{u}_n + g_n((\hat{u}_k)_{k \in \mathbb{Z}}) \,, \quad n \in \mathbb{Z} \,,
\end{align*}
for the (temporal) Fourier coefficients $\hat{u}_n$ and a convolution nonlinearity $g_n$. Upon restricting to an invariant subspace (arising through various symmetries that \eqref{EQ:NLW} is naturally equipped with) one is left with a similar countably infinite system of ODEs, but now for a real-valued variable $u_n, n \in \mathbb{N}_{\rm odd}$. The tuning of the band structure via the resonance-free triplet from Definition~\ref{def:resonance_free_triplet} results in a Floquet exponent configuration that allows for the use of a blend of center manifold reduction for PDEs and bifurcation theory (see, e.g. \cite{HI11}) to reduce the analysis to a planar ODE with periodic coefficients. This reduced ODE is treated via averaging methods and finally demonstrated to support homoclinic (to zero) orbits (explicitly given in terms of $\mathrm{sech}$-functions). These then finally give the desired breather solution. See \cite{BCLS11} for details.\\

{\bf Remark.} It turned out to be computationally convenient in the breather construction to take $s, q$ to be $2$- instead of $1$-periodic, since in that case the Floquet exponents turned out to be purely real. We do not further comment here on that (see \cite{BCLS11}).\\

{\bf Related work.} Note that breather solutions are closely related to standing and traveling modulated pulses/wave packets whose existence (on long, but finite time scales) is usually demonstrated via derivation and justification of the Nonlinear Schr\"odinger equation as amplitude equation (see \cite{BSTU06} for a single wave packet and \cite{CS12} for wave packet interaction). Furthermore, there are so-called generalized breather solutions which feature (small) periodic tails (see \cite{LBCCS09}) due to resonances with the band structure. The main motivation for studying nonlinear wave equations with spatially periodic coefficients comes from an application in photonics where one  strives to design periodic materials (represented by the periodic coefficients) such that there exist standing light pulses (represented by breather solutions) (see \cite{BT04}).\\

{\bf Linear stability analysis.} Linearizing around the breather solution \eqref{EQ:breather_approximation} results in a linear operator
\begin{align}\label{eq:Floquet_operator_breather}
    \mathcal{L}_{\rm breather}
    \begin{pmatrix}
    u \\ v
    \end{pmatrix}
    := \left( \frac{1}{s_{\rm step}(x)} \right)\begin{pmatrix}
    0 & 1\\ \partial_x^2 + q_{\rm step}(x) + 3 \rho u_{\rm breather}(x,t)^2 & 0
    \end{pmatrix}
    \begin{pmatrix}
    u \\ v
    \end{pmatrix}\, .
\end{align}
The time periodicity of the breather will result in the use of Floquet theory in time. Due to the temporal translation invariance of the PDE, one immediately has that $[\partial_t u_{\rm breather}(x,0),\allowbreak \partial_t^2 u_{\rm breather}(x,0)]$ is an eigenfunction belonging to the Floquet multiplier $+1$ (or Floquet exponent $0$). This eigenfunction is associated with the so-called phase mode. Furthermore, the essential spectrum is related to the band structure of the linear PDE \eqref{eq:linear_problem}. Retrieving any further information on the spectrum analytically is difficult. Hence, in the remaining part of the manuscript, we will turn to numerical methods to analyze the stability
of the constructed breather structures.\\

{\bf Numerical computation of $q_0$ from Definition~\ref{def:resonance_free_triplet}.} After choosing $ p \in (3/8, 1/2) $, one can compute $s_{\rm step}$ and thereby $\omega_*$ as in Lemma~\ref{lemma:exact_band_structure}. This then allows to compute $q_0$ by numerically defining a function $ T(\mu) := \mathrm{trace} \left( \Phi_{\omega_*}(x; s_{\rm step}, \mu, 0)|_{x = 2} \right) +2 \ ,
$ with $\Phi_{\omega_*}$ the canonical fundamental matrix corresponding to \eqref{EQ:onset}. We have suppressed its dependence on $s_{\rm step},\omega_*$ since these are kept fixed, while we vary $q_0$ in $q_{\rm step}(x) = s_{\rm step}(x)q_0$. Numerical root finding for $T(q_0)=0$
gives then $q_0$ (upon having a good enough first guess). Note that there is also a way of finding an explicit formula for $q_0$ (see \cite{BCLS11}) in the step function case, but this is of limited numerical use since it will turn out to be more convenient to perform numerical computations for smoothed step functions.\\

{\bf Numerical band structure computation for the linear problem.}
Let us now also briefly discuss how to numerically implement the computation of the band structure for the linear PDE \eqref{eq:linear_problem}. Using Floquet theory, one can compute for which values of $\omega$ there exists a bounded solution $v$ of \eqref{eq:spectral_problem} by inspecting the Floquet exponents (which determine if a fundamental system for \eqref{eq:spectral_problem} gives bounded solutions or exponential growth/decay). In particular, one has a relation between the Floquet exponents  $\widetilde{l}_{\pm} = \widetilde{l}_{\pm}(\omega) \in \mathbb{C}$  and the spectral parameter $\omega$ given by
\begin{align}\label{eq:Floquet_stuff_2}
    e^{ \widetilde{l}_{\pm}} = \frac12 \left( \mathcal{D}(\omega) \pm \sqrt{\mathcal{D}(\omega)^2 - 4} \right) \, .
\end{align}
Moreover, for our equation we have that the Floquet exponents are either purely real or purely imaginary (with the usual non-uniqueness from possibly adding multiples of $2\pi i$ due to the periodicity of the complex exponential). One can now choose values $\omega \in \mathbb{R}$ and numerically compute for each such $\omega$: (1) the canonical fundamental matrix $\Phi_{\omega}$ for \eqref{EQ:bandstructure_ODE}, (2) the Floquet exponents from \eqref{eq:Floquet_stuff_2}, and finally (3) if $|\mathcal{D}(\omega)| < 2$, then $\widetilde{l}_{\pm} = \pm i l \in i \mathbb{R}$ and $\omega$ belongs to a spectral band. Plotting the relation $\pm l$ vs $\omega$ (for $\omega$ in a spectral band) gives the band structure, while simply plotting $\omega$ belonging to a spectral band gives the "banded" spectrum. See Figure~\ref{fig:band_structure2} for a numerical computation of the band structure for $q(x) = q_0 s_{\rm step}(x)$.\\

{\bf Numerical computation of the breather approximation $u_{\rm breather}$ from \eqref{EQ:breather_approximation}.} Equipped with a\\ resonance-free triplet $(s_{\rm step}, q_{\rm step}, \omega_*)$, one can numerically compute the canonical fundamental matrix $\Phi_{\omega_*}(x; s_{\rm step}, q_0, 0)$. Using Floquet theory, one can derive the representation \eqref{eq:floquet_representation}, which, in particular, gives $\Phi_{\omega_*}(2) = e^{2 M}$ such that $M$ can be determined by numerically computing the matrix logarithm. Note that one actually expects a Jordan block to arise (see \cite{BCLS11} for details), so computing $S$ and $J$ numerically can be a delicate task. Finally, setting $P(x) = \Phi_{\omega_*}(x) e^{-x M}$ and $Q(x) = P(x)S$ enables the computation of all necessary constants in \eqref{eq:constants} by numerical integration. See Figure~\ref{FIG:breather} for an illustration of such a computation.

\section{Stability of breathers}\label{subsec:stability_result}

We now consider the above mentioned setup
of a resonance-free triplet \eqref{def:resonance_free_triplet}
numerically towards identifying a numerically
exact (up to a prescribed accuracy)
breather waveform.
Throughout this section, we fix $p=0.45$ and $\varepsilon=0.3$. This value of $\varepsilon$ is selected so that the breather is localized  for the chosen values of the domain length and $p$.
{More
concretely the breather's FWHM in this case 
is smaller than 1/8 of the domain length.}

In order to identify breather solutions
practically, we discretize the relevant
Klein-Gordon PDE. Among all the discretization schemes, we have chosen
for simplicity to utilize finite differences (but also confirmed their
ability to capture theoretically predicted features as will be discussed
below). To this aim, we choose a uniform grid whose lattice spacing is given by $h$. Due to the large domain needed for containing the breather, we need to choose a value of $h$ that gives a tractable lattice size. The domain extends over the interval $[-L,L)$ (with periodic boundary conditions), and the number of lattice sites is $N=2L/h$ (for simplicity $N$ is taken as an even integer number). In our numerics, we have taken $h=0.02$ and $L=60$ so that $N=6000$.
The value of $x$ at the lattice nodes, i.e. $x_n\equiv x(n)$, is dependent of the choice of $p$. For the numerics, we have taken $p$ with two decimal digits, and we have chosen $x_n=nh$ if $100p$ is even and $x_n=(n+1/2)h$ if $100p$ is odd.

With this discretization, we can write (\ref{EQ:NLW}) as
\begin{equation}\label{eq:DKG}
s_n\ddot{u}_n = \frac{1}{h^2}\Delta u_n - q_nu_n - u_n^3\,,\qquad n=-N/2\ldots N/2-1
\end{equation}
with $u_n\equiv u(x_n)$ and $q_n=s_n(q_0-\varepsilon^2)$. In order to get a good correlation between the analytical and the numerical spectrum for linear modes, a sixth-order discretization for $\partial_x^2$  is introduced, {\it i.e.} \cite{Fornberg},

\begin{equation}
\Delta u_n=\frac{1}{90}(u_{n-3}+u_{n+3})-\frac{3}{20}(u_{n-2}+u_{n+2})+\frac{3}{2}(u_{n-1}+u_{n+1})-\frac{49}{18}u_n.
\end{equation}

Numerically, from a practical perspective, we have found
the choice of $s_n$ to be central towards the convergence
of our numerical scheme.
Here, we define the function $s(x)$ so that just in the step (i.e., for $x=p$ or $x=1-p$), the value of $s(x)$ takes an intermediate value, i.e. $1+C/2$. That is,
\begin{equation}
s_n=1+C\times
\begin{cases}
0, & x_n<p \\
1/2, & x_n=p \\
1, & p<x_n<1-p\\
1/2, & x_n=1-p\\
0, & x_n>1-p\\
\end{cases}
\qquad \text{if}\ x_n\in[0,1)
\end{equation}

For our numerical purposes, the choice of $s(x)$ that
we have found to correlate efficiently with this requirement is an
approximated step function in the form:
\begin{equation}\label{eq:sx_tanh}
s(x)=1+\frac{C}{2}\left[\tanh(\mu(x-p))+\tanh(-\mu(x-(1-p)))\right]
\end{equation}
with a high value of $\mu$ such as $10^5$. This choice gives a value of $\omega^*=2.61778$, which is very close to the analytical one corresponding to the step function, namely $\omega^*=3\pi/(8p)=2.61800$. This choice also gives the value $q_0=3.2701$ for the constant appearing in Definition \ref{def:resonance_free_triplet}. We have compared the exact band structure {determined in Section \ref{sec:num_band}}, stemming from linearizing (\ref{EQ:NLW}), with the one arising from the linearization of (\ref{eq:DKG}). In other words, if one introduces the linear modes expression $u_n(t)=U_n\exp(i\omega_l t)$ at (\ref{eq:DKG})
around the trivial equilibrium $u_n=0$,
the following generalized eigenvalue problem is obtained:
\begin{equation}\label{eq:linear}
    \omega_l^2s_nU_n=\left(\frac{1}{h^2}\Delta-q_n\right) U_n
\end{equation}
whose numerical diagonalization gives the spectrum of linear modes $\omega_l$. Figure~\ref{fig:linearmodes} shows the analytical and numerical linear modes spectrum for $p=0.45$ and the odd integer multiples of $\omega^*$. One can see that there are resonances with the numerical spectrum for the 13th (and higher) harmonic. However, as we explain below, they have no effect in the breather existence.

\begin{figure}[!h]
\centering
\begin{minipage}[b]{3cm}
 \scalebox{.5}{\includegraphics{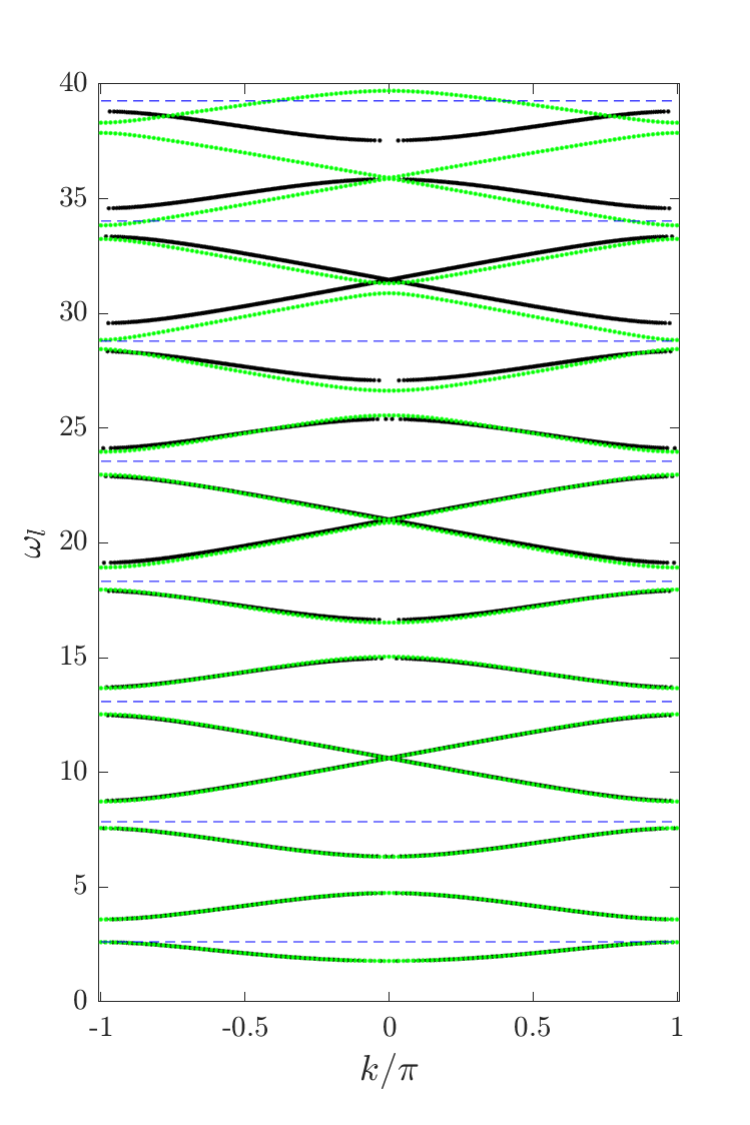}}
\end{minipage}
\hspace{3cm}
\begin{minipage}[b]{5cm}
 \caption{{Numerical linear dispersion for $p=0.45$ obtained from (\ref{eq:linear}) (green dots) and numerically computed dispersion relation via the Floquet discriminant (black dots) for $s(x)$ as in (\ref{eq:sx_tanh}). The frequency $\omega^*$ and its odd multiples up to 15 are indicated by blue dashed lines.}}
 \label{fig:linearmodes}
 \end{minipage}
\end{figure}

In order to get time-reversible breathers from (\ref{eq:DKG}),
one can work in the Fourier space by expanding $u_n(t)$
into associated modes according to:
\begin{equation}
    \label{eq:Fourier}
    u_n(t)=z_{0,n}+2\sum_{k=1}^{K}z_{k,n}\cos(k\omega t),\quad z_{k,n}\equiv z_k(x_n)
\end{equation}
Thus, (\ref{eq:DKG}) transforms into a set of $(K+1)N$ nonlinear algebraic equations $\mathbf{F}(\{z_{k,n}\})=0$:

\begin{equation}
    \label{eq:KG_Fourier}
    F_{k,n} \equiv -k^2 \omega^2 s_n z_{k,n}-\frac{1}{h^2}(z_{k,n+1}+z_{k,n-1}-2z_{k,n})+q_n z_{k,n}+ {\mathcal F}_{k,n}=0.
\end{equation}

Here, ${\mathcal F}_{k,n}$ denotes the $k$-th mode at the $n$-th site of the discrete cosine Fourier transform of $u_n^3$:

\begin{equation}
    {\mathcal F}_{k,n}=\frac{1}{2K+1}\left[u_n^3(0)+2\sum_{q=1}^K u_n^3(t_q)\cos(k\omega t_q)\right],
\end{equation}
with $u_n(t)$ taken from \eqref{eq:Fourier} and $t_q=2\pi q/((2K+1)\omega)$.

For solving (\ref{eq:KG_Fourier}), we make use of fixed point methods. Among those methods, we have chosen the trust-region-dogleg which is the default algorithm in Matlab's \texttt{fsolve} function. In order to implement the fixed point method, we have chosen the initial guess as $z_{k\neq1,n}=0,\ z_{1,n}=u_\mathrm{breather}(x_n,t)$
where $u_\mathrm{breather}$ refers to the analytical
breather approximation of Eq.~(\ref{EQ:breather_approximation}).
Figure~\ref{fig:profile} shows the profile of the breather for $\omega=\omega^*$ and $p=0.45$. This breather can be continued until a resonance with linear modes occurs. Taking into account that the breather possesses harmonics up to the $K$-th one
(within our Ansatz), only resonances of $k\omega^*$ with $k\leq K$ odd are considered. With those constraints, and accounting for
the fact that we have taken $K=11$, the {resonances with the 13th (and higher) harmonics observed in Fig.~\ref{fig:linearmodes} are irrelevant
  for our considerations herein.} For the choice $p=0.45$ there are no resonances in the interval $\omega\in(2.6014,2.6221)$. The lower bound within the interval corresponds to the first harmonic resonance with the top of the first band, whereas the upper bound holds for the resonance of the 11th harmonic with the bottom of the tenth band. However, the breather can be continued past the latter boundary as the effect of such resonance is introducing a modification of the breather of the order of $10^{-8}$. Similarly, the next resonance, which occurs for $\omega=2.6639$ and comes from the 9th harmonic, also has a negligible effect (of the order of $10^{-7}$). Figure~\ref{fig:energy} shows the energy versus $\omega$ for $p=0.45$ where one can see that the energy tends to zero at the lower bound of the interval (as the breather bifurcates from the corresponding $\pi$-phonon at the upper edge of the band  ). The energy of the breather is equivalent to the Hamiltonian associated with (\ref{EQ:NLW}), i.e.

\begin{equation}\label{eq:Hamiltonian}
    H=\int \left[\frac{1}{2}s(x)(\partial_t u(x,t))^2+\frac{1}{2}(\partial_x u(x,t))^2+\frac{1}{2}q(x)u^2(x,t)+\frac{1}{4}u^4(x,t)\right]\,\mathrm{d}x
\end{equation}

\begin{figure}[!h]
\begin{center}
\begin{tabular}{cc}
\includegraphics[width=.45\textwidth]{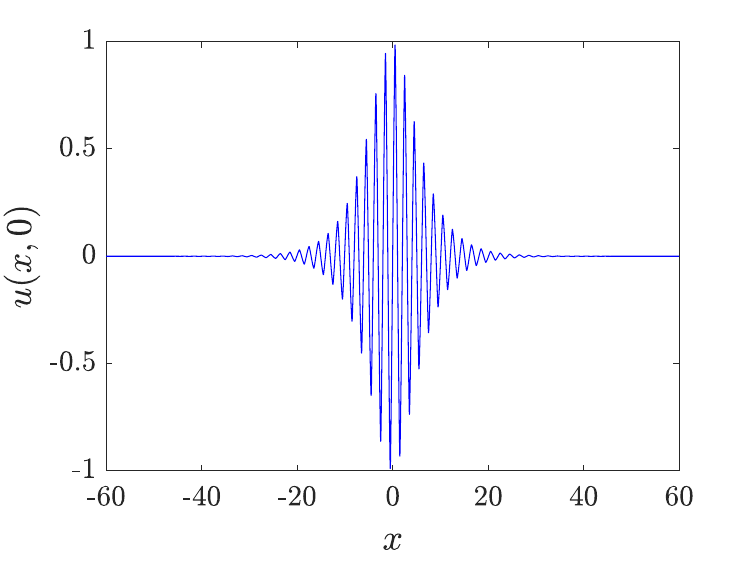} &
\includegraphics[width=.45\textwidth]{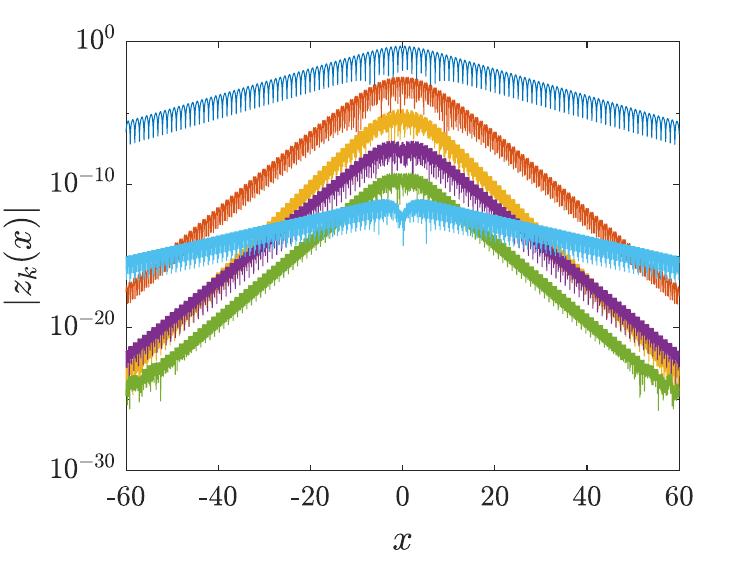} \\
\end{tabular}
\end{center}
\caption{Breather at $p=0.45$ and $\omega=\omega^*$. Left panel shows the profile at $t=0$ while the right panel shows the odd-k Fourier coefficients $|z_{k}(x)|$ at semilogarithmic scale.}
\label{fig:profile}
\end{figure}

\begin{figure}[!h]
\begin{center}
\includegraphics[width=.45\textwidth]{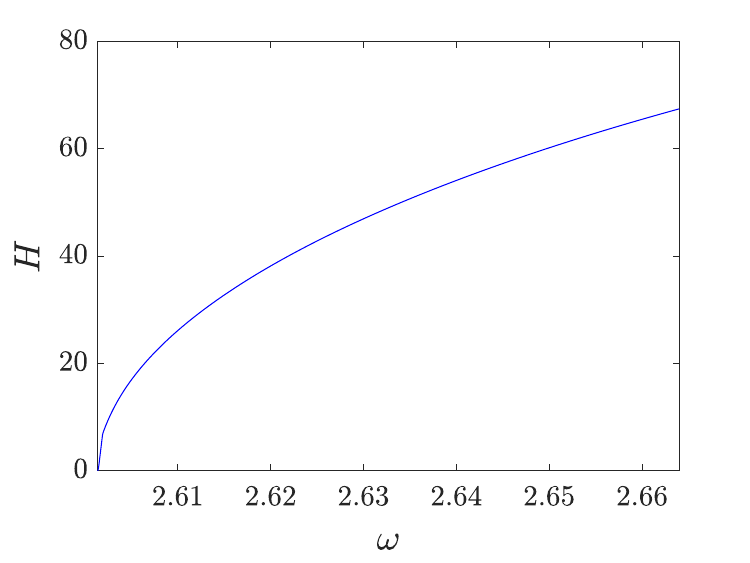}
\end{center}
\caption{Dependence of the energy as a function of the
breather frequency for $p=0.45$.}
\label{fig:energy}
\end{figure}

The stability properties of the obtained solutions are
identified by means of Floquet analysis. To that effect, we  add a perturbation $\xi(x,t)$ to the solution $u(x,t)$ at (\ref{EQ:NLW}). The resulting linearized PDE reads:

\begin{equation}\label{eq:perturb}
    s(x) \partial_t^2 \xi = \partial_x^2 \xi - (q(x) + u^2)\xi
\end{equation}

In order to perform the Floquet analysis, the aim is to compute the spectrum of the Floquet operator, whose matrix representation is known as the monodromy matrix $\mathcal{M}$. This is defined according to the map:

\begin{equation}
    \Omega(T)=\mathcal{M}\Omega(0),\qquad \Omega(t)=[\xi(x,t),\partial_t\xi(x,t)]
\end{equation}

The eigenvalues of $\mathcal{M}$ represent the Floquet multipliers, and can be written as: $\lambda=\exp(i\theta)$. Given the real, symplectic nature of the Floquet operator, the multipliers come in pairs $(\lambda,1/\lambda)$ if they are real, or in quadruplets $(\lambda,\lambda^*,1/\lambda,1/\lambda^*)$ if they are complex. For a periodic solution to be stable, generally, it is needed that $|\lambda|\leq1$. In our more specific Hamiltonian setting, stability necessitates that all the eigenvalues are on the unit circle. Additionally, there is always a pair of eigenvalues at $\theta=0$, given its Hamiltonian  nature.

More precisely, due to the invariance of
the model under consideration under time translation, there is an eigenvalue pair at $1+0i$. {As it is mentioned after (\ref{eq:Floquet_operator_breather}), its associated eigenmode, known as phase mode,} corresponds to $\Omega(t)=[\partial_tu(x,t),\partial^2_t u(x,t)]$. In order to find numerically the monodromy spectrum, Eq.~(\ref{eq:perturb}) must be discretized on the same grid
as the one where the solution was identified. Then,
we perform simulations of these linearization
equations for a period. To this aim, we have used the fourth-order explicit and symplectic Runge-Kutta-Nystr\"om method developed in \cite{calvo} with a time-step $\delta t=T/1500$, with $T=2\pi/\omega$. With this choice, we get, for the breather with frequency $\omega^*$ and $p=0.45$, that the mode associated with the above symmetry
is at $\theta\approx9\times10^{-7}$; this
serves as a benchmark for the accuracy of our Floquet multiplier
computations. Very close to $1+0i$, we can find a real localized mode at a distance $\approx5\times10^{-7}$ from $1+0i$, indicating an instability (although with an extremely small growth rate). Figure \ref{fig:modes} shows the shape of both modes. Notice that{, on the one hand} for the phase mode, $\xi(x,0)=0$ because of the time-reversibility of the breather{; on the other hand, one might think that the localized mode could be related to a translational mode $\xi(x,0)=\partial_x u(x,0)$, a
feature that can occur in discrete Klein-Gordon lattices~\cite{AC98}. However, as shown in Fig.~\ref{fig:modes}, this localized mode has a different shape than the translational one. In addition, the $\partial_t\xi(x,0)$ component of the localized mode is not zero. Although in the case shown in Fig.~\ref{fig:modes} it is tiny compared to the $\xi(x,0)$ component, it grows with $\omega$ and, for instance, it is only 8 times smaller when $\omega=2.66$}. This suggests that some velocity may be
imparted on the structure upon perturbation and may
accordingly lead to potential mobility; see also below
for the relevant dynamics.

\begin{figure}[!h]
\begin{center}
\begin{tabular}{cc}
\includegraphics[width=.45\textwidth]{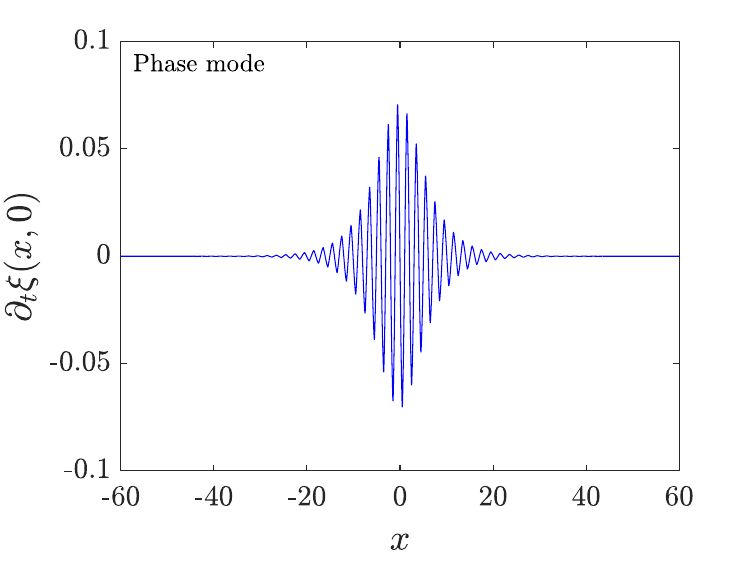} &
\includegraphics[width=.45\textwidth]{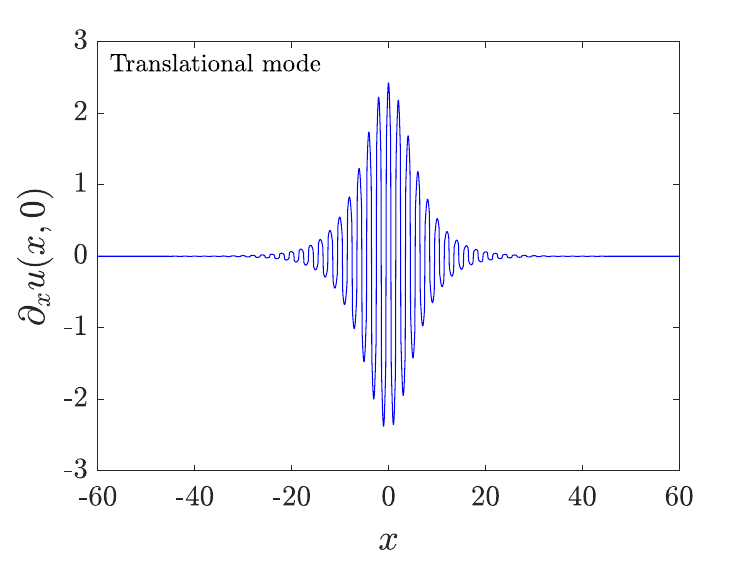} \\
\includegraphics[width=.45\textwidth]{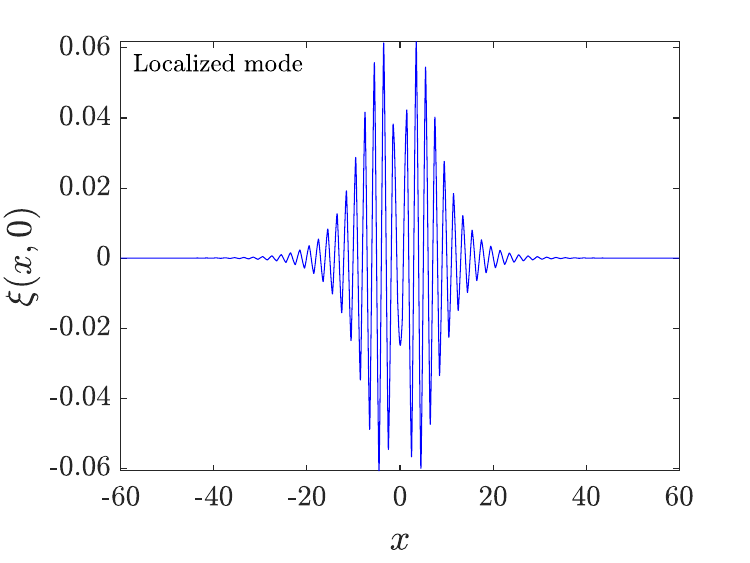} &
\includegraphics[width=.45\textwidth]{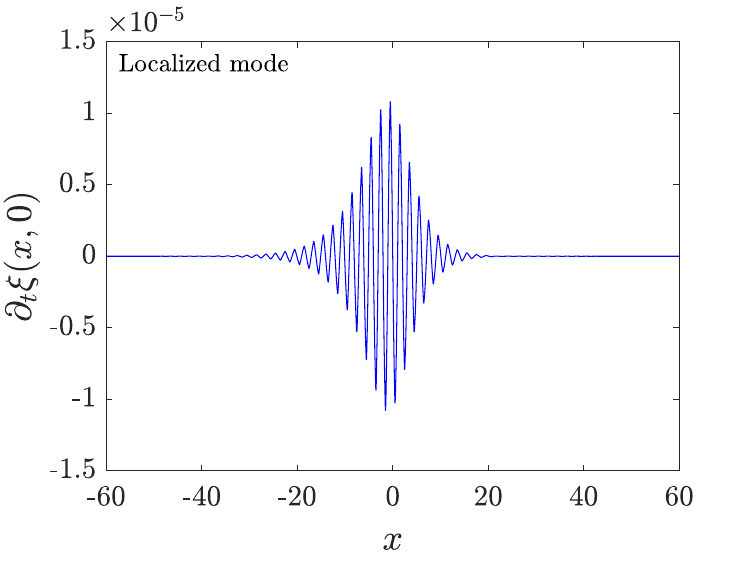} \\
\end{tabular}
\end{center}
\caption{{(Top left) Non-zero component of the phase mode. (Top right) Translational mode defined as $\partial_x u(x,0)$. (Bottom) Components of the localized mode. In every panel, $p=0.45$ and $\omega=\omega^*$.}}
\label{fig:modes}
\end{figure}

{When the frequency is varied from $\omega^*$, the localized mode is always real and higher than 1 and has a monotonically increasing behaviour with $\omega$. As $|\lambda|\geq1$ in all the considered interval, the breather will be unstable.}

\begin{figure}[!h]
\begin{center}
\includegraphics[width=.45\textwidth]{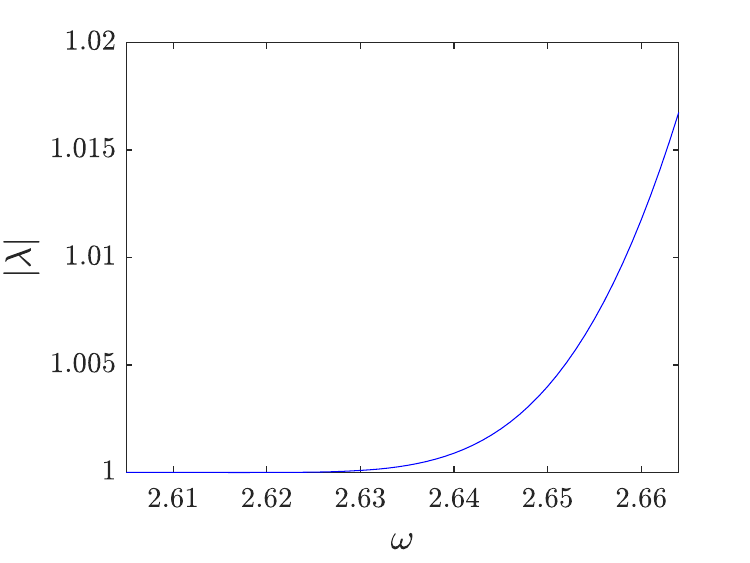}
\end{center}
\caption{Dependence of the modulus of the multiplier associated with the localized mode with respect to the frequency for $p=0.45$.}
\label{fig:stab}
\end{figure}

{In order to explore the effect of the instability caused by the localized mode, we have simulated Eq.~(\ref{EQ:NLW}) with the perturbed stationary breather as initial condition, which has been taken in the form $u(x,0)=\tilde{u}(x,0)$, $\partial_t u(x,0)=\delta\xi(x,0)$, with $\tilde{u}(x,t)$ being the breather solution of Eq.~(\ref{EQ:NLW}), and with $\delta$ being the perturbation strength, while $\xi(x,0)$ is the corresponding component of the localized eigenmode. Interestingly, the perturbed breather starts moving with a constant speed, as can be observed in Fig.~\ref{fig:moving1}, which shows the evolution of the moving breather with $\omega=2.64$ for a perturbation $\delta=0.05$. This figure displays both the wavefunction profile $u(x,t)$ for short times and the energy density $h(x,t)$ for longer times; the latter arises from the definition of the Hamiltonian (\ref{eq:Hamiltonian}) so that $H=\int h(x)\mathrm{d}x$. For this and other simulations, a time step $\delta t=T/100\approx0.026$ was chosen, which kept the energy conserved with a relative error of $~10^{-8}$.}

\begin{figure}[!h]
\begin{center}
\begin{tabular}{cc}
\includegraphics[width=.45\textwidth]{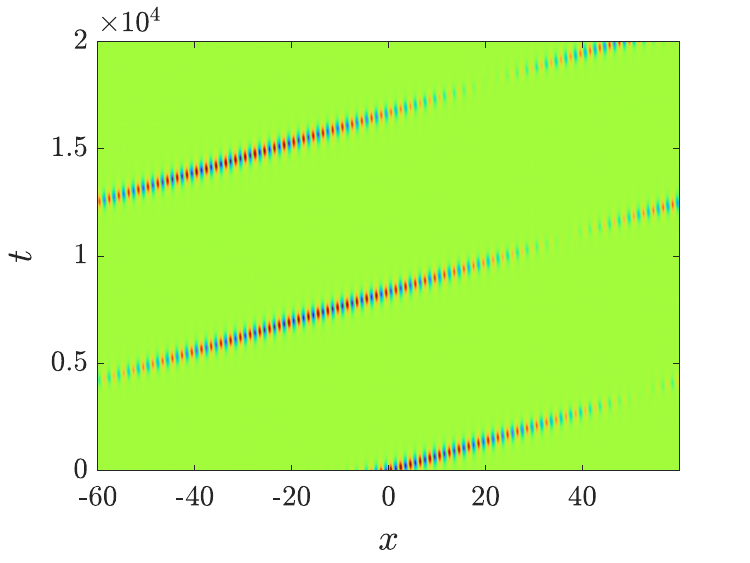} &
\includegraphics[width=.45\textwidth]{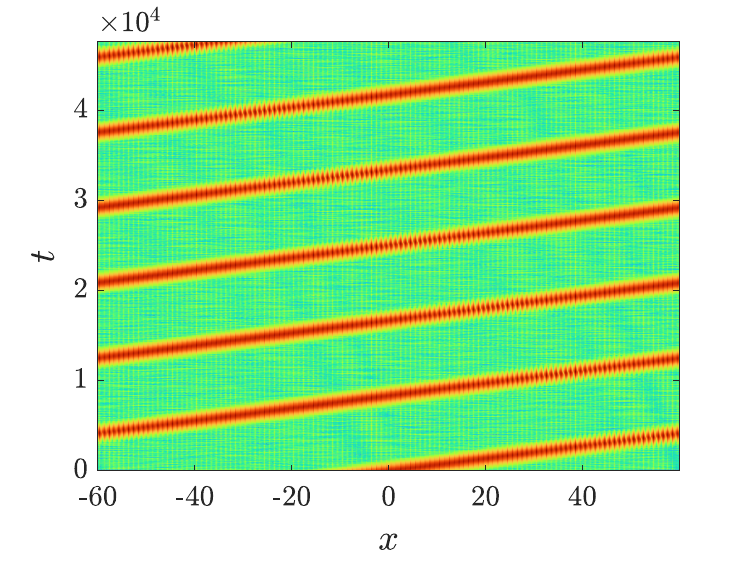} \\
\end{tabular}
\end{center}
\caption{{Evolution of the breather wavefunction $u(x,t)$ (left) and the  logarithm 
(base $10$) of the energy density $h(x,t)$ (right) with respect to time (left) for the moving breather with $\omega=2.64$ and $p=0.45$, generated by a perturbation $\delta=0.05$.}}
\label{fig:moving1}
\end{figure}

{A consequence of the smoothness of the motion can be observed in Fig.~\ref{fig:moving2} where the time-dependence of the energy density's center of mass, defined as}

\begin{equation}
    X_c=\frac{1}{H}\int xh(x)\,\mathrm{d}x, 
\end{equation}
{is plotted for breathers with $\omega=2.64$ and $\omega=2.66$ perturbed by $\delta=0.01$ and $\delta=0.05$. Remarkably, the relevant center of
mass follows a straight line, which, in turn, 
clearly indicates that the breather moves with constant velocity despite  the emitted radiation.

\begin{figure}[!h]
\begin{center}
\begin{tabular}{cc}
\includegraphics[width=.45\textwidth]{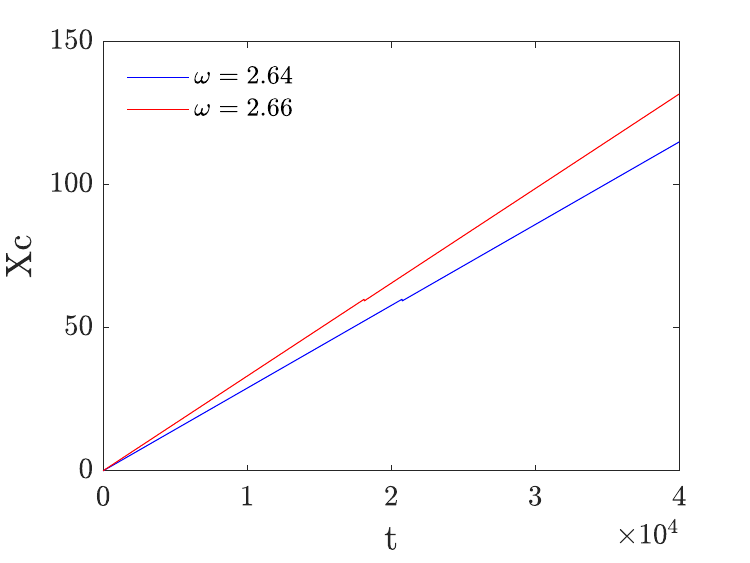} &
\includegraphics[width=.45\textwidth]{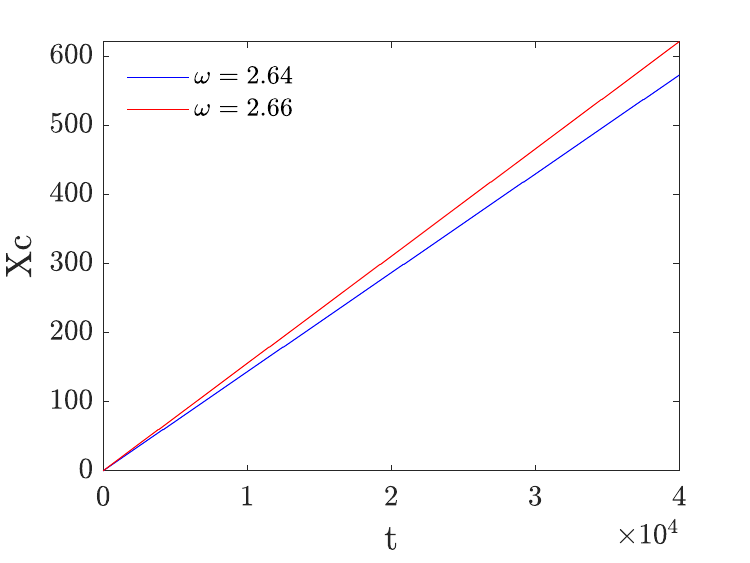} \\
\end{tabular}
\end{center}
\caption{{Evolution of the energy center of mass for the moving breathers with $\omega=2.64$ and $\omega=2.66$ obtained by adding a perturbation of amplitude $\delta=0.01$ (left) and $\delta=0.05$ (right).}}
\label{fig:moving2}
\end{figure}

{These results strongly seem to point to the instability of our stationary breather state towards moving breathers, although the localized eigenmode is different than the translational mode. One explanation of this discrepancy can rely on the fact that the projection of the localized mode onto a translational-type
  mode (i.e., one that leads to mobility) is large enough so that the perturbation is able to lead to a breather motion.} They also clearly seem to point towards the existence of exact traveling breather waveforms which would be of particular interest to identify in the so-called co-traveling frame
(traveling with the breather). Nevertheless, this is a substantial task in its own right that is deferred to future publications.

{We close this section by noting
the intriguing feature of the periodic vanishing of the field observed in Fig.~\ref{fig:moving1}. 
Probing the dynamics, we find that the vanishing frequency is $\sim8.7\times10^{-4}$. This phenomenon, which might have its origin in the fact that the unstable eigenmode is not a purely translational mode, appears over a time scale of the order of the inverse of the rate of growth of the
unstable mode. For this particular breather, the growth mode is $\sim3.7\times10^{-4}$, so its inverse is 
roughly within the same
ballpark as the observations of Fig.~\ref{fig:moving1}.
Of course, once the nonlinear dynamics of the evolution
of the instability sets in, the dynamics is less predictable,
yet here some apparent recurrence of the relevant
field vanishing seems to take place in the left panel
of the figure.
}

\section{Conclusions and Future Challenges}

In the present work we have revisited the important and
interesting topic of the potential existence of breather
type waveforms in continuous, nonlinear media. The current
pervasive impression in the nonlinear community is that
such waveforms are absent in generic nonlinear wave equations
except for the setting of integrable models such as the
sine-Gordon equation (or similarly the modified
Korteweg-de Vries equation). Nevertheless, recent significant
mathematical developments have paved the way towards
identifying such exact (up to a prescribed accuracy)
waveforms in continuum, nonlinear media, most notably
for PDEs of the Klein-Gordon type bearing (suitably
designed) spatial heterogeneity.

While these efforts have provided a theoretical backdrop
for the existence of such breather waveforms, to our knowledge,
such states have not been systematically computed previously,
nor has their stability been elucidated. This is a primary
contribution of this work, where such solutions are identified
via a Fourier space method and subsequently their Floquet
analysis has been explored, indeed as parameters are varied,
such as the parameters of the breather ($\omega$), as well
as ones of the model (such as $p$). We have generically been
able to indeed identify the theoretically established
waveforms. We have also shown that in the examples
considered, they always feature a real pair of Floquet
multipliers, associated with a spectral instability
of the relevant breathers. This instability was dynamically
explored in its own right, showcasing the potential of
these breathers for traveling when perturbed in the
pertinent unstable eigendirections of the standing breather
state.

Naturally, this work has paved the way for numerous
additional questions towards future study. One of these
certainly  concerns the direct computation (as exact solutions
in a co-traveling frame) of the traveling waveforms and the
potential examination also of their stability.
On the other hand, the instability of stationary breathers
herein also poses the question of whether there can be
other variants of this (or other similar) class of models
where heterogeneous, continuum, stationary breathers can
actually be dynamically robust and spectrally stable.
At a larger scale, given the theoretical understanding
of the requirements for establishing such structures,
it is also relevant to consider optimization problems
enabling the design of ``optimal nonlinear media'' towards,
e.g., the widest possible interval of existence of breather
waveforms. Some of these topics are presently under
consideration and relevant findings will be reported
in future publications.

\paragraph{Acknowlegdements.} PGK gratefully acknowledges discussions with Professor C.E. Wayne on the subject. MCB gratefully acknowledges prior collaboration with Dr. Christian Hamster which partly motivated this research.

\end{document}